\newtheorem{theorem}{Theorem}
\newtheorem{acknowledgement}[theorem]{Acknowledgement}
\newtheorem{definition}[theorem]{Definition}
\newtheorem{example}[theorem]{Example}
\newtheorem{lemma}[theorem]{Lemma}
\newtheorem{proposition}[theorem]{Proposition}
\newtheorem{corolary}[theorem]{Corollary}
\newenvironment{proof}[1][Proof]{\noindent\textbf{#1.} }{\ \rule{0.5em}{0.5em}}
\begin{document}

\title{Composition operators on weighted spaces of holomorphic functions on
$JB^{\ast}-$triples}
\author{Michael Mackey
\and Pablo Sevilla-Peris
\and Jos\'e A. Vallejo}
\date{}
\maketitle

\footnotetext{\textbf{MSC(2000) Classification:} 17C65, 46L05, 81R10\\
\textbf{Keywords:} Yang-Baxter equation, JB$^{\ast}$-triples, composition operator}
\begin{abstract}
We characterise continuity of composition operators on weighted spaces of
holomorphic functions $H_{v}(B_{X})$, where $B_{X}$ is the open unit ball of a
Banach space which is homogeneous, that is, a $JB^{\ast}-$triple.

\end{abstract}

\section{Introduction}

In this note, we prove a  result concerning composition
operators on $JB^{\ast}-$triples. These triples are Banach spaces
which carry a certain algebraic structure. They form quite a large
class, including Hilbert
spaces and $C^{\ast}-$algebras (see example \ref{example1} below), and are
interesting from both the mathematical and the physical point of view. On the
mathematical side, they play a r\^{o}le similar to that of semisimple Lie
algebras in the study of symmetric finite dimensional manifolds, but in the
context of infinite dimensional spaces (see \cite{Up85} and references
therein). Also, $JB^{\ast}-$triples are intimately related to Jordan algebras,
which are long known to appear in quantum mechanics (see
\cite{Duf38,Kem39,Jac49}, or \cite{Up92} for a recent account).
$JB^{\ast}-$triples have been found to be useful in solving
Yang-Baxter equations (\cite{Ok93}), constructing Lie superalgebras (see
\cite{kAoK04} and \cite{Sal05}) and in the study of multifield integrable systems (see
\cite{AdSviYam99} or \cite{Svi92} and references therein).

With respect to composition operators, let us recall that on a classical level
the coherent states of a physical system are described by holomorphic
functions on the classical phase space (see \cite{Ber74}). When passing to the
quantum framework, one deals with the general concept of state over
$\mathcal{B}(\mathcal{H})$ (the algebra of bounded linear operators on a
Hilbert space $\mathcal{H}$), which is a normalized positive linear functional
on $\mathcal{B}(\mathcal{H})$ (see \cite{Ara99}). In these contexts,
composition operators can be seen as \textquotedblleft
dictionaries\textquotedblright\ translating these states from one reference
frame to another when we have a holomorphic transformation between the
underlying spaces $\phi:X\rightarrow Y$ (in this case, the composition
operator associated to $\phi$, $C_{\phi}$, is a map $C_{\phi}:H(Y)\rightarrow
H(X)$, where $H(X)$ is the space of holomorphic mappings from $X$ to
$\mathbb{C}$).

Both situations (the classical and the quantum ones), are generalized in the
study of weighted spaces of holomorphic functions on the unit ball $B$ of a Banach space $X$,
denoted $H_{v}(B)$. These spaces have been widely studied in recent years, and
are quite well understood. The first case considered was that of $B$
being the unit disc or a domain in $\mathbb{C}$ or $\mathbb{C}^{n}$.  Special
interest has been given to the study of composition operators between these
spaces; we refer to
\cite{BiBoGa93,BoDoLiTa98,CoHe00} and particularly to the recent surveys \cite{Bi02,Bo03} and the references therein
for information about the
subject.\ Some study has also been devoted to the situation when $B_{X}$ is
the open unit ball of a Banach space $X$ (see e.g.
\cite{ArGaLi97,GaMaRu00,GaMaSe04}).  Some of the results in \cite{BoDoLiTa98} were generalised in
\cite{GaMaSe04} to the Banach space setting.  One result given in
\cite{GaMaSe04} characterizes continuity of composition operators when
$B$ is the open unit ball of a Hilbert space.  The proof relies on the
fact that there exist enough automorphisms of $B$.  In this note, we
show that this requirement is also fulfilled if we consider unit balls
of $JB^{\ast}-$triples.\newline

\section{Preliminary results}

We begin by fixing notation and some results; for details see \cite{GaMaSe04}.
Let $X$ be a Banach space and $B_{X}$ its open unit ball.\ By a weight we
mean any continuous bounded mapping $v:B_{X}\rightarrow]0,\infty\lbrack$.\ We
denote by $H(B_{X})$ the space of holomorphic functions $f:B_{X}%
\longrightarrow\mathbb{C}$.\ A set $A\subset B_{X}$ is said to be $B_{X}%
$-bounded if $d(A,X\setminus B_{X})>0$.\ The subspace of $H(B_{X})$
consisting of those
functions which are bounded on the $B_{X}$-bounded sets is denoted by
$H_{b}(B_{X})$.\ Following \cite{BoDoLiTa98} and \cite{GaMaRu00} we consider
\[
H_{v}(B_{X})=\{f\in H(B_{X}):\Vert f\Vert_{v}=\sup_{x\in B_{X}}%
\ v(x)|f(x)|<\infty\},
\]
where $v$ is a weight.\ With the norm $\Vert\ \Vert_{v}$, the space
$H_{v}(B_{X})$ is a Banach space.\newline

Given a weight $v$, we consider the associated weight $\tilde{v}%
(x)=1/\sup_{\Vert f\Vert_{v}\leq1}|f(x)|$ (see \cite{BiBoGa93,BoDoLiTa98,GaMaSe04}).\ We
say that a weight $v$ is norm-radial if $v(x)=v(y)$ for every
$x$, $y$ such that $\Vert x\Vert=\Vert y\Vert$.\ If $v$ is norm-radial and
non-increasing (with respect to the norm) then $\tilde{v}$ is also norm-radial
and non-increasing.

A weight $v$ satisfies Condition I if $\inf_{x\in rB_{X}}v(x)>0$ for every
$0<r<1$ (\cite{GaMaRu00}).\ If $v$ satisfies Condition I, then $H_{v}%
(B_{X})\subseteq H_{b}(B_{X})$ (\cite[Proposition 2]{GaMaRu00}).\newline

\begin{definition}
Let $X$ and $Y$ be Banach spaces and
$\phi:B_{X}\rightarrow B_{Y}$ a holomorphic mapping.\ The composition
operator associated to $\phi$ is defined by
\[
C_{\phi}:H(B_{Y})\longrightarrow H(B_{X})\;\;\;,\;\;f\rightsquigarrow C_{\phi
}(f)=f\circ\phi.
\]

\end{definition}

$C_{\phi}$ is clearly linear.\ Denoting by $\tau_{0}$ the compact-open
topology, $C_{\phi}$ is also $(\tau_{0},\tau_{0})$-continuous.\ Given any two
weights $v_{X},v_{Y}$ defined on $B_{X},B_{Y}$ respectively, we consider the
restriction $C_{\phi}:H_{v_{Y}}(B_{Y})\rightarrow H_{v_{X}}(B_{X})$ whenever
this is well defined.\ It is known that if $C_{\phi}$ is well defined, then it
is continuous (see \cite{GaMaSe04}). The following result was proved
in \cite{GaMaSe04} (see also \cite[Proposition 2.1]{BoDoLiTa98}).

\begin{proposition}
\label{contgen} Let $v_{X}$, $v_{Y}$ be two weights satisfying Condition I and
$\phi: B_{X} \longrightarrow B_{Y}$ holomorphic.\ Then the following are
equivalent,\newline$(i)$ \ \ $C_{\phi} : H_{v_{Y}}(B_{Y}) \longrightarrow
H_{v_{X}}(B_{X})$ is well defined and continuous.\newline$(ii)$
\ $\displaystyle\sup_{x\in B_{X}}\displaystyle\frac{v_{X}(x)}{\tilde{v}_{Y}
(\phi(x))}< \infty$.\newline$(iii)$ $\displaystyle\sup_{x\in B_{X}}
\displaystyle\frac{\tilde{v}_{X}(x)}{\tilde{v}_{Y}(\phi(x))}< \infty$%
.\newline$(iv)$ \ $\displaystyle\sup_{\|\phi(x)\| > r_{0}} \frac{v_{X}%
(x)}{\tilde{v}_{Y} (\phi(x))} < \infty$ for some $0<r_{0}<1$.
\end{proposition}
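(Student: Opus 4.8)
The plan is to establish the cycle $(i)\Rightarrow(iii)\Rightarrow(ii)\Rightarrow(i)$ together with the separate equivalence $(ii)\Leftrightarrow(iv)$. The only ingredients I will need are two elementary consequences of the definition $\tilde v(z)=1/\sup_{\|f\|_{v}\le1}|f(z)|$, which I would record first for any weight $v$ on a ball $B$: (a) $\tilde v\ge v$ pointwise --- indeed if $\|f\|_{v}\le1$ then $v(z)|f(z)|\le1$ for all $z$, so the supremum defining $1/\tilde v(z)$ is at most $1/v(z)$; and (b) $\tilde v(z)|f(z)|\le\|f\|_{v}$ for every $f\in H(B)$ and every $z$ (equivalently $\|f\|_{\tilde v}=\|f\|_{v}$), the inequality ``$\ge$'' being (a) and ``$\le$'' following by normalising $\|f\|_{v}\le1$, so that $f$ itself competes in the supremum and hence $\tilde v(z)|f(z)|\le1$. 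I will also use the point-evaluation reading $1/\tilde v(z)=\sup_{\|f\|_{v}\le1}|f(z)|$ directly, and the elementary fact that $f\circ\phi\in H(B_X)$ whenever $f\in H(B_Y)$.

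For $(i)\Rightarrow(iii)$ I would assume $C_\phi$ well defined and continuous, set $M=\|C_\phi\|$, fix $x\in B_X$, and note that for every $f\in H_{v_Y}(B_Y)$ with $\|f\|_{v_Y}\le1$, property (b) applied to $\tilde v_X$ gives
\[
\tilde v_X(x)\,|f(\phi(x))|=\tilde v_X(x)\,|(C_\phi f)(x)|\le\|C_\phi f\|_{v_X}\le M .
\]
Hence $|f(\phi(x))|\le M/\tilde v_X(x)$ for all such $f$; taking the supremum over $f$ (which is exactly $1/\tilde v_Y(\phi(x))$) then gives $\tilde v_X(x)\le M\,\tilde v_Y(\phi(x))$, and a supremum over $x\in B_X$ yields $(iii)$. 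The step $(iii)\Rightarrow(ii)$ is then immediate from (a) applied to $v_X$.

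For $(ii)\Rightarrow(i)$ I would put $M=\sup_{x\in B_X}v_X(x)/\tilde v_Y(\phi(x))<\infty$, take $f\in H_{v_Y}(B_Y)$, observe $C_\phi f=f\circ\phi\in H(B_X)$, and estimate, using (b) applied to $\tilde v_Y$ at the point $\phi(x)$,
\[
v_X(x)\,|f(\phi(x))|=\frac{v_X(x)}{\tilde v_Y(\phi(x))}\cdot\tilde v_Y(\phi(x))\,|f(\phi(x))|\le M\,\|f\|_{v_Y}
\]
for every $x\in B_X$; this shows $C_\phi f\in H_{v_X}(B_X)$ with $\|C_\phi f\|_{v_X}\le M\|f\|_{v_Y}$, so $C_\phi$ is well defined and continuous.

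Finally, $(ii)\Rightarrow(iv)$ is trivial since the supremum in $(iv)$ runs over a subset of $B_X$, and for $(iv)\Rightarrow(ii)$ I would only need to control $v_X(x)/\tilde v_Y(\phi(x))$ on $\{x\in B_X:\|\phi(x)\|\le r_0\}$: there $v_X(x)\le\sup_{B_X}v_X<\infty$ because $v_X$ is bounded, while fixing any $r\in(r_0,1)$, property (a) and Condition I for $v_Y$ give $\tilde v_Y(\phi(x))\ge v_Y(\phi(x))\ge\inf_{y\in rB_Y}v_Y(y)>0$, so the ratio stays bounded there as well. I do not expect a real obstacle here: the whole argument is bookkeeping with the associated weight, and the hypotheses of the proposition enter only at this last step. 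The one point that requires a little care is $(i)\Rightarrow(iii)$, where it is essential to work with $\tilde v_X$ rather than $v_X$: that is exactly what upgrades continuity of $C_\phi$ to the pointwise estimate $\tilde v_X(x)\le\|C_\phi\|\,\tilde v_Y(\phi(x))$, from which $(iii)$ (not merely $(ii)$) drops out.
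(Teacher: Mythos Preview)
Your argument is correct and complete. Note, however, that the paper does not actually prove this proposition: it is quoted without proof from \cite{GaMaSe04} (see also \cite[Proposition~2.1]{BoDoLiTa98}), and your write-up is essentially the standard argument one finds in those references.
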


\section{$JB^{\ast}-$triples}

We intend to study composition operators on a $JB^{\ast}-$triple $X$.
In this case, $B_{X}$ is a bounded symmetric
domain.\ Given a domain $D$ in a Banach space, a symmetry at $a\in D$ is a
biholomorphic map $s_{a}:D\rightarrow D$ such that $s_{a}^{2}=id$ and
$s_{a}(a)=a$ is an isolated fixed point.  A bounded symmetric domain is a
bounded domain (or a domain biholomorphically equivalent to a bounded domain)
which has a symmetry at every point.

\begin{definition}
\label{JB*} A $JB^{\ast}-$triple is a Banach space $Z$ with a triple product
$\{\ ,\ ,\ \}:Z^{3}\longrightarrow Z$ that is linear and symmetric in the
first and third variables (symmetric in the sense that $\{x,y,z\}=\{z,y,x\}$
for all $x,z$) and antilinear in the second variable and which
satisfies,\newline$(i)$ the mapping $x\Box x$, given by $x\Box x(z)=\{x,x,z\}$
is Hermitian, $\sigma(x\Box x)\geq0$ and $\Vert x\Box x\Vert=\Vert x\Vert^{2}%
$,\newline$(ii)$ for every $a,b,x,y,z\in X$, the Jordan triple identity
\[
\{a,b,\{x,y,z\}\}=\{\{a,b,x\},y,z\}-\{x,\{b,a,y\},z\}+\{x,y,\{a,b,z\}\}
\]
holds.
\end{definition}

For  $x,y\in Z$, we define three mappings $x\Box y$ (linear), $Q_{x}$
(antilinear) and $B(x,y)$ (linear) by
\begin{gather*}
x\Box y(z)=\{x,y,z\},\\\;Q_{x}(z)=\{x,z,x\},\\ B(x,y)=id-2x\Box y+Q_{x}Q_{y}.
\end{gather*}
We also consider the operator $B_{x}=B(x,x)^{1/2}$ (the square root taken in
the sense of functional calculus, i.e. $B_{x}\circ B_{x}=B(x,x)$).\ It is
known that (\cite{Ka94})
\begin{equation}
\Vert B_{x}^{-1}\Vert=\frac{1}{1-\Vert x\Vert^{2}}.\label{normaBx}%
\end{equation}

For background on $JB^{\ast}-$triples, see \cite{Ha81,Me03}.\newline

It is a well known fact that the open unit ball of a Banach space is symmetric
if and only if the space is a $JB^{\ast}-$triple \cite{Ka83}.\ Also, a bounded
domain $D$ is symmetric if and only if it has a transitive group of
biholomorphic mappings $\{g_{a}\}_{a\in D}$ and a symmetry at some point
$p$.\ In this case the bounded symmetric domain is biholomorphically
equivalent to the unit ball of a $JB^{\ast}-$triple and all biholomorphic
mappings on the unit ball can be explicitly described.\ They are of the form
$Kg_{a}$ where $K$ is a surjective linear isometry and $g_{a}$ are M\"{o}bius
type mappings that satisfy $g_{a}(0)=a$ and $g_{a}^{-1}=g_{-a}$ (\cite{Ka94}%
).\ These mappings can be defined from the triple product by
\begin{align*}
g_{a}(x) &  =a+(B(a,a)^{1/2}\circ B(x,a)^{-1})(x-Q_{x}(a))\\
&  =a+B_{a}(\sum_{n=0}^{\infty}(-x\Box a)^{n}a)
\end{align*}
If $s_{0}$ denotes the symmetry at $0$ (i.e. $x\mapsto-x$), the symmetry at
any other point of the unit ball $a$ is given by $g_{a}\circ s_{0}\circ
g_{-a}$.\newline

\begin{example}
\label{example1}Examples of $JB^{\ast}-$triples are
Hilbert spaces and $C^{\ast}-$algebras.\ On a Hilbert space the triple product
is given by $\{x,y,z\}=1/2((x|y)z+(z|y)x)$.  The M\"{o}bius mappings for Hilbert spaces were defined by Renaud in
\cite{Re73}.\ If $Z$ is a $C^{\ast}-$algebra, the triple product is given by
$\{x,y,z\}=1/2(xy^{\ast}z+zy^{\ast}x)$.\ Another example of $JB^{\ast}%
-$triples that includes the two previous ones are $J^{\ast}$-algebras, that is
closed subspaces of $\mathcal{L}(H,K)$ ($H$ and $K$ Hilbert spaces) which are
closed under $A\mapsto AA^{\ast}A$ (cf.~\cite{Ha81}).\newline
\end{example}

As already mentioned, the symmetries of a bounded symmetric domain can be
defined using a set of M\"{o}bius-like mappings.\ Let us show that these
vector M\"{o}bius mappings behave in the same way as the scalar ones when we
take the supremum on a sphere (a circle in the scalar case).

\begin{lemma}
\label{supga} Let $B$ be a bounded symmetric domain (i.e., the open unit ball
of a $JB^{\ast}-$triple $Z$) and $\{g_{a}\}_{a\in B}$ the transitive group of
biholomorphic mappings that define the symmetries.\ Then, for each $0<r<1$
\[
\sup_{\Vert x\Vert=r}\Vert g_{a}(x)\Vert=\frac{\Vert a\Vert+r}{1+r\Vert
a\Vert}%
\]
and this supremum is attained at some point.
\end{lemma}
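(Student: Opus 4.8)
The plan is to prove the two inequalities separately: the upper bound $\sup_{\Vert x\Vert=r}\Vert g_{a}(x)\Vert\le\frac{\Vert a\Vert+r}{1+r\Vert a\Vert}$ by a soft one-variable Schwarz--Pick argument, and the reverse inequality together with attainment by restricting $g_{a}$ to the subtriple generated by $a$. The case $a=0$ is trivial since $g_{0}=\mathrm{id}$ and $\sup_{\Vert x\Vert=r}\Vert x\Vert=r$, so assume $a\neq 0$.

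For the upper bound, fix $x$ with $\Vert x\Vert=r$ and pick, by Hahn--Banach, $\psi\in Z^{\ast}$ with $\Vert\psi\Vert=1$ and $\psi(g_{a}(x))=\Vert g_{a}(x)\Vert$. Define $\gamma:\mathbb{D}\to\mathbb{C}$ by $\gamma(\zeta)=\psi\big(g_{a}(\tfrac{\zeta}{r}x)\big)$. Since $\Vert\tfrac{\zeta}{r}x\Vert=|\zeta|<1$ we have $\tfrac{\zeta}{r}x\in B$, hence $g_{a}(\tfrac{\zeta}{r}x)\in B$ and $|\gamma(\zeta)|\le\Vert g_{a}(\tfrac{\zeta}{r}x)\Vert<1$; thus $\gamma$ is a holomorphic self-map of $\mathbb{D}$ with $\gamma(0)=\psi(a)$. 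Put $\alpha=\psi(a)$, so $|\alpha|\le\Vert a\Vert$, and let $\rho(u,w)=\big|\tfrac{u-w}{1-\bar w u}\big|$ be the pseudohyperbolic metric. By the Schwarz--Pick lemma $\rho(\gamma(\zeta),\alpha)\le|\zeta|$ for all $\zeta$, in particular $\rho(\gamma(r),\alpha)\le r$. Now $\{w\in\mathbb{D}:\rho(w,\alpha)\le r\}$ is exactly the image of $\{|z|\le r\}$ under the automorphism $z\mapsto\frac{z+\alpha}{1+\bar\alpha z}$ (automorphisms preserve $\rho$ and this one sends $0$ to $\alpha$), and a direct maximisation over the circle $|z|=r$ shows that its points have modulus at most $\frac{|\alpha|+r}{1+|\alpha|r}$. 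Since $s\mapsto\frac{s+r}{1+sr}$ is increasing on $[0,1)$ and $|\alpha|\le\Vert a\Vert$, and since $\gamma(r)=\psi(g_{a}(x))=\Vert g_{a}(x)\Vert$, we get $\Vert g_{a}(x)\Vert\le\frac{\Vert a\Vert+r}{1+r\Vert a\Vert}$. (This half uses only that $g_{a}$ is holomorphic from $B$ into $B$ with $g_{a}(0)=a$.)

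For the reverse inequality and attainment, invoke the functional calculus for $JB^{\ast}$-triples: the closed subtriple $Z_{a}$ generated by $a$ is isometrically triple-isomorphic to a commutative $C^{\ast}$-algebra $C_{0}(\Omega)$, where $\Omega$ is a locally compact subset of $(0,\Vert a\Vert]$ with $\Vert a\Vert\in\Omega$, and under this identification $a$ corresponds to the coordinate function $\iota(t)=t$. Since $Z_{a}$ is a subtriple and $g_{a}$ is built entirely from the triple product and the elements $a,x\in Z_{a}$, the restriction of $g_{a}$ to $B\cap Z_{a}$ is the Möbius map of the $JB^{\ast}$-triple $Z_{a}$, which in the $C_{0}(\Omega)$ picture acts pointwise by the scalar Möbius transformation, $g_{a}(x)(\omega)=\frac{a(\omega)+x(\omega)}{1+a(\omega)x(\omega)}$. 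Taking $x_{0}=\frac{r}{\Vert a\Vert}a\in Z_{a}$ we have $\Vert x_{0}\Vert=r$, and $g_{a}(x_{0})$ corresponds to the function $t\mapsto\frac{t(\Vert a\Vert+r)}{\Vert a\Vert+rt^{2}}$. Because $\Vert a\Vert r<1$, the map $t\mapsto\frac{t}{\Vert a\Vert+rt^{2}}$ is increasing on $(0,\Vert a\Vert]$, so its supremum over $\Omega$ is attained at $t=\Vert a\Vert\in\Omega$ with value $\frac{1}{1+r\Vert a\Vert}$; hence $\Vert g_{a}(x_{0})\Vert=\frac{\Vert a\Vert+r}{1+r\Vert a\Vert}$, giving both the lower bound for the supremum and its attainment at $x_{0}$.

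The easy part is the upper bound, which is pure Schwarz--Pick. The main obstacle is the attainment/lower bound, whose only substantive ingredient is the local Gelfand representation of the singly generated subtriple $Z_{a}$ as $C_{0}(\Omega)$ together with the identification of $g_{a}|_{B\cap Z_{a}}$ with a fibrewise scalar Möbius map; once that is in hand, the two remaining one-variable optimisations are elementary calculus.
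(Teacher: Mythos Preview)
Your proof is correct. The lower bound and attainment argument is essentially the paper's own: restrict to the singly generated subtriple $Z_{a}\cong C_{0}(\Omega)$ via Kaup's result, take $x_{0}=\frac{r}{\Vert a\Vert}a$, and compute the $C_{0}$-norm of the resulting function by the elementary monotonicity of $t\mapsto t/(\Vert a\Vert+rt^{2})$ on $(0,\Vert a\Vert]$. Your statement that $\Vert a\Vert\in\Omega$ (so the supremum is genuinely attained at a point of $\Omega$) is a small clarification the paper leaves implicit; it follows from the compactness of $\Omega\cup\{0\}$.

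Where you diverge is in the upper bound. The paper uses the structural identity
\[
\frac{1}{1-\Vert g_{a}(x)\Vert^{2}}=\Vert B_{a}^{-1}\circ B(a,x)\circ B_{x}^{-1}\Vert
\]
together with $\Vert B_{x}^{-1}\Vert=(1-\Vert x\Vert^{2})^{-1}$ and $\Vert B(a,x)\Vert\le(1+\Vert a\Vert\,\Vert x\Vert)^{2}$, which are specific to $JB^{\ast}$-triple theory. Your Schwarz--Pick route is softer and more general: it uses only that $g_{a}$ is a holomorphic self-map of $B$ with $g_{a}(0)=a$, Hahn--Banach to linearise, and the one-variable Schwarz--Pick lemma. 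This has the pleasant consequence that the bound $\Vert\phi(x)\Vert\le\frac{\Vert a\Vert+\Vert x\Vert}{1+\Vert a\Vert\,\Vert x\Vert}$ in fact holds for \emph{any} holomorphic $\phi:B\to B$ with $\phi(0)=a$, not just for the M\"obius transformations. The paper's argument, on the other hand, ties in directly with the Bergman operators already introduced and exhibits the identity behind the inequality.
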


\begin{proof}
First, for any bounded symmetric domain we show that $\Vert g_{a}(x)\Vert
\leq\frac{\Vert a\Vert+\Vert x\Vert}{1+\Vert a\Vert\cdot\Vert x\Vert}$.\ It is
well known (\cite{Me03}) that
\[
\frac{1}{1-\Vert g_{a}(x)\Vert^{2}}=\Vert B_{a}^{-1}\circ B(a,x)\circ
B_{x}^{-1}\Vert.
\]
In particular, using \eqref{normaBx} we get
\[
\frac{1}{1-\Vert g_{a}(x)\Vert^{2}}\leq\frac{1}{1-\Vert a\Vert^{2}}\ (1+\Vert
a\Vert\cdot\Vert x\Vert)^{2}\ \frac{1}{1-\Vert x\Vert^{2}}.
\]
Hence
\[
\Vert g_{a}(x)\Vert\leq\frac{\Vert a\Vert+\Vert x\Vert}{1+\Vert a\Vert
\cdot\Vert x\Vert}.
\]

Next we show that the bound is attained, in the sense that there
exists $x\in B$, $\Vert x\Vert=r$ with $\Vert g_{a}(x)\Vert=\frac{\Vert
a\Vert+r}{1+r\ \Vert a\Vert}$.\ Clearly we may assume $a\neq0$.\ Let us
consider $Z_{a}$ the $JB^{\ast}-$subtriple of $Z$ generated by $a$, that is,
the smallest (closed) $JB^{\ast}-$subtriple of $Z$ that contains $a$.\ It is
obviously enough to find $x\in Z_{a}$ attaining the bound.\ A result of
Kaup (\cite[Proposition 5.3]{Ka83}) shows that for any $JB^{\ast}$-triple and
$a\in Z$, $Z_{a}$ is isometrically (triple) isomorphic to $C_{0}(\Omega)$,
where $\Omega\subseteq \mathbb{R}^{+}$ satisfies $\Omega\cup\{0\}$ is compact.\ The
M\"{o}bius maps on the unit ball of $Z_{a}$, once composed with this
isomorphism, give $g_{a}(z)=\frac{a+z}{1+\bar{a}\ z}$, where $a$ and $z$ are
in the open unit ball of $C_{0}(\Omega)$. For $z=\frac{r}{\Vert a\Vert}a$, we
have $z\in C_{0}(\Omega)$ and $\Vert z\Vert=r$.\ Hence
\[
g_{a}(z)=\frac{\left(  1+\frac{r}{\Vert a\Vert}\right)  \ a}{1+|a|^{2}%
\ \frac{r}{\Vert a\Vert}}=\frac{r+\Vert a\Vert}{\Vert a\Vert+r\ |a|^{2}}\ a.
\]
Now, $\Vert g_{a}(z)\Vert=(r+\Vert a\Vert)\ \left\Vert \frac{a}{\Vert
a\Vert+r\ |a|^{2}}\right\Vert =(r+\Vert a\Vert)\ \sup_{\omega\in\Omega}%
\frac{|a|}{\Vert a\Vert+r\ |a|^{2}}(\omega)$.\ But since $|a|\leq\Vert
a\Vert\leq1$ and $r<1$, it turns out that $\frac{|a|}{\Vert a\Vert+r\ |a|^{2}%
}$ is an increasing function of $|a|$, that is $\left\Vert \frac{a}{\Vert
a\Vert+r\ |a|^{2}}\right\Vert =\frac{1}{1+r\ \Vert a\Vert}$.\ This gives
\[
\Vert g_{a}(z)\Vert=\frac{\Vert a\Vert+\Vert z\Vert}{1+\Vert a\Vert\cdot\Vert
z\Vert}
\]
which is what was required.
\end{proof}

\section{A result for composition operators}

The following result is a very well known version of the Schwarz lemma for
Banach spaces (cf. \cite{DinSch}).\newline

\begin{lemma}
Let $X$ and $Y$ be Banach spaces and  $f:B_{X}\longrightarrow B_{Y}$
holomorphic with $f(0)=0$.
Then, for all $x\in B_{X}$,
\[
\Vert f(x)\Vert_{Y}\leq\Vert x\Vert_{X}.
\]

\end{lemma}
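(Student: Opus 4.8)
The plan is to reduce the Banach-space statement to the classical one-variable Schwarz lemma by slicing along a complex line through $0$ and $x$, and testing the slice against a norm-one supporting functional.

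First I would dispose of the trivial case $x=0$. For $x\neq 0$, put $u=x/\Vert x\Vert_{X}$, so that $\Vert u\Vert_{X}=1$ and $\lambda u\in B_{X}$ whenever $\lambda$ lies in $\mathbb{D}$, the open unit disc of $\mathbb{C}$. By the Hahn--Banach theorem choose $\varphi\in Y^{\ast}$ with $\Vert\varphi\Vert=1$ and $\varphi(f(x))=\Vert f(x)\Vert_{Y}$, and define $h:\mathbb{D}\longrightarrow\mathbb{C}$ by $h(\lambda)=\varphi(f(\lambda u))$.

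Next I would check the three properties that make $h$ amenable to the scalar Schwarz lemma: $h$ is holomorphic, being the composition of $\lambda\mapsto\lambda u$ (holomorphic from $\mathbb{D}$ into $B_{X}$), $f$, and the bounded linear functional $\varphi$; $h(0)=\varphi(f(0))=\varphi(0)=0$; and $\vert h(\lambda)\vert\le\Vert\varphi\Vert\,\Vert f(\lambda u)\Vert_{Y}\le 1$ for all $\lambda\in\mathbb{D}$, since $f$ takes values in $B_{Y}$. Thus $h$ is a holomorphic self-map of $\mathbb{D}$ fixing the origin, and the classical Schwarz lemma gives $\vert h(\lambda)\vert\le\vert\lambda\vert$ for every $\lambda\in\mathbb{D}$.

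Finally I would evaluate at $\lambda=\Vert x\Vert_{X}\in[0,1)$: since $\Vert x\Vert_{X}\,u=x$, this yields $\Vert f(x)\Vert_{Y}=\varphi(f(x))=h(\Vert x\Vert_{X})\le\Vert x\Vert_{X}$, which is precisely the claimed inequality. There is no genuine obstacle in this argument; the only point needing a little care is the passage from the vector-valued map $f$ to a scalar holomorphic function, and the supporting functional provided by Hahn--Banach is exactly the tool that makes this passage clean. (One could instead argue directly from the Cauchy estimates applied to $t\mapsto f(tu)$, but the functional-based slicing is more transparent.)
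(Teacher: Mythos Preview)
Your argument is correct: slicing along the complex line through $x$ and composing with a Hahn--Banach norming functional reduces the statement to the classical one-variable Schwarz lemma, and all the verifications you list go through. The paper does not actually prove this lemma; it is stated as a well-known fact with a reference to \cite{DinSch}, so there is nothing to compare against beyond noting that your reduction is the standard one found in that source.
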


We can now prove a generalization of \cite[Theorem 2.3]{BoDoLiTa98} and
\cite[Theorem 4.1]{GaMaSe04}.\ The statement is slightly different from
the previous cases but the proof is basically the same, up to technical
changes.\ We include a proof for the sake of completeness.

\begin{theorem} \label{contJB*triple}
Let $X$ be any Banach space and $Z$ a $JB^{\ast}-$triple.\ Let $v_{Z}$ be a
norm-radial and non-increasing weight on $Z$ and $v_{X}$ be a weight on $X$ for which
there exists $K>0$ such that \\
\centerline{if
$z\in Z$ and $x\in X$ with $\Vert z\Vert\leq\Vert x\Vert$, then
\(
v_{Z}(z)\geq Kv_{X}(x).
\)}
Then every composition operator $C_{\phi}:H_{v_{Z}} (B_{Z})\longrightarrow H_{v_{X}}(B_{X})$
is continuous for every holomorphic map $\phi:B_{X}\rightarrow B_{Z}$ if and only if the function
$l(r) := \tilde{v}_{Z}(z)$ for $\Vert z \Vert  = 1-r$, $0 < r < 1$ satisfies $l(s) \leq  M l(s/2)$
for $s$ close enough to 0.
\end{theorem}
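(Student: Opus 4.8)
The plan is to use Proposition~\ref{contgen} to turn ``$C_\phi$ continuous'' into the analytic condition $\sup_{x\in B_X}v_X(x)/\tilde v_Z(\phi(x))<\infty$, and then to control $\|\phi(x)\|$ against $\|x\|$ by combining the Schwarz lemma with Lemma~\ref{supga}. Since $v_Z$ is norm-radial and non-increasing, so is $\tilde v_Z$, and hence $\tilde v_Z(z)=l(1-\|z\|)$ for a positive, bounded, non-decreasing $l$ on $(0,1)$; thus the criterion reads $\sup_x v_X(x)/l(1-\|\phi(x)\|)<\infty$. I will also use the elementary fact $v\le\tilde v$ for any weight. The one geometric ingredient is a ``hyperbolic contraction'': for holomorphic $\phi:B_X\to B_Z$ put $a=\phi(0)$; then $g_{-a}\circ\phi$ fixes $0$, the Banach-space Schwarz lemma gives $\|g_{-a}(\phi(x))\|\le\|x\|$, and writing $\phi(x)=g_a\bigl(g_{-a}(\phi(x))\bigr)$ and invoking the (pointwise) inequality established inside the proof of Lemma~\ref{supga} yields
\[
1-\|\phi(x)\|\ \ge\ \tfrac12\,(1-\|a\|)(1-\|x\|)\qquad(x\in B_X).
\]

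For the ``if'' direction I would fix $\phi$, put $c:=1-\|\phi(0)\|\in(0,1]$ and $t:=1-\|x\|$. Choosing $z_x\in Z$ with $\|z_x\|=\|x\|$, the coupling hypothesis together with $v_Z\le\tilde v_Z$ gives $v_X(x)\le K^{-1}v_Z(z_x)\le K^{-1}\tilde v_Z(z_x)=K^{-1}l(t)$, while the contraction inequality and monotonicity of $l$ give $l(1-\|\phi(x)\|)\ge l(ct/2)$. Iterating $l(s)\le Ml(s/2)$ exactly $N:=\lceil\log_2(2/c)\rceil$ times (a number fixed by $\phi$) yields $l(t)\le M^{N}l(t/2^{N})\le M^{N}l(ct/2)$ as long as $t<s_0$, so $v_X(x)/l(1-\|\phi(x)\|)\le K^{-1}M^{N}$ on $\{\|x\|>1-s_0\}$. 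To cover the rest I would invoke Proposition~\ref{contgen}(iv) with $r_0=1-cs_0/2$: if $\|\phi(x)\|>r_0$ then $ct/2\le 1-\|\phi(x)\|<cs_0/2$ forces $t<s_0$, so the previous bound applies and $C_\phi$ is continuous.

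For the ``only if'' direction I would argue by contraposition. Assume the doubling fails; then there is $s_n\downarrow0$ with $l(s_n)/l(s_n/2)\to\infty$. Fix $\xi\in X^*$ with $\|\xi\|=1$ and a norm-one $e\in Z$, take the single disc automorphism $\psi(w)=\tfrac{w+1/2}{1+w/2}$, and set $\phi(x):=\psi(\xi(x))\,e$, which is holomorphic $B_X\to B_Z$. Picking $x_n\in B_X$ with $\xi(x_n)=1-s_n$ and $\|x_n\|$ as close to $1-s_n$ as we like, one computes $1-\|\phi(x_n)\|=1-\psi(1-s_n)=\tfrac{s_n}{3-s_n}\le s_n/2$, hence $\tilde v_Z(\phi(x_n))=l\bigl(\tfrac{s_n}{3-s_n}\bigr)\le l(s_n/2)$. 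If one moreover shows $v_X(x_n)\gtrsim l(1-\|x_n\|)\gtrsim l(s_n)$ -- and this is exactly the step that should use the assumption relating $v_X$ and $v_Z$, together with the fact that for norm-radial weights the associated weight on the ball depends only on the radial profile -- then $v_X(x_n)/\tilde v_Z(\phi(x_n))\gtrsim l(s_n)/l(s_n/2)\to\infty$, so $C_\phi$ is not continuous by Proposition~\ref{contgen}.

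The ``if'' direction I expect to be routine: once Lemma~\ref{supga} and the Schwarz lemma give $1-\|\phi(x)\|\gtrsim 1-\|x\|$, what remains is the scalar doubling bookkeeping of \cite{BoDoLiTa98,GaMaSe04}. The step I expect to be the main obstacle is the lower bound $v_X(x_n)\gtrsim l(s_n)$ in the ``only if'' direction: the coupling hypothesis as stated is one-sided, so one must argue carefully -- essentially the classical test-function computation for $H_v^\infty(\mathbb{D})$, transplanted to $B_X$ through $\xi$ -- in order to bridge the gap between $v_Z$ and $\tilde v_Z$ and produce the lower estimate the construction needs.
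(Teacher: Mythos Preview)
Your ``if'' direction is essentially the paper's argument: both factor $\phi=g_a\circ\psi$ with $a=\phi(0)$ and $\psi(0)=0$, use the Schwarz lemma on $\psi$, control $\|g_a\|$ via Lemma~\ref{supga}, and then iterate the doubling inequality enough times (the paper writes this as $l(s)\le K^n\,l(s/c)$ with $c=\tfrac{1+\|a\|}{1-\|a\|}$; your $N=\lceil\log_2(2/c)\rceil$ is the same bookkeeping). The only cosmetic difference is that the paper handles the compact range $s\ge s_0$ by a continuity/positivity argument on $l$, while you invoke Proposition~\ref{contgen}(iv); either works.

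The ``only if'' direction is where you diverge, and the obstacle you flag is genuine and not repairable along your lines: the coupling hypothesis $v_Z(z)\ge Kv_X(x)$ for $\|z\|\le\|x\|$ only yields an \emph{upper} bound on $v_X$, so no choice of $x_n$ and no transplanted disc test function will produce the lower bound $v_X(x_n)\gtrsim l(s_n)$ that your ratio estimate needs. The paper sidesteps this completely by never building a map $B_X\to B_Z$ for the converse. Instead it reduces to the self-map case $C_{g_a}:H_{v_Z}(B_Z)\to H_{v_Z}(B_Z)$, where the same weight sits on both sides and the one-sidedness issue vanishes. Assuming these are continuous, it fixes a \emph{single} $a$ with $\|a\|=2/5$, uses Proposition~\ref{contgen} to get $\tilde v_Z(z)\le M_a\,\tilde v_Z(g_a(z))$, and then invokes the \emph{exact} supremum $\sup_{\|z\|=r}\|g_a(z)\|=\tfrac{r+\|a\|}{1+r\|a\|}$ from Lemma~\ref{supga} (not just the inequality) to read off
\[
l(1-r)\ \le\ M_a\,l\!\left(\frac{(1-r)(1-\|a\|)}{1+r\|a\|}\right)\ \le\ M_a\,l\!\left(\frac{1-r}{2}\right)\qquad(1/2<r<1).
\]
No sequences, no contraposition, no test functions. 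One may note that the reduction ``all $C_\phi$ continuous $\Rightarrow$ all $C_{g_a}$ continuous'' is not itself spelled out in the paper (since $g_a$ is a self-map of $B_Z$, not a map $B_X\to B_Z$); but regardless, the route to take for the converse is to work entirely inside $Z$ with a single M\"obius map of the right modulus and let Lemma~\ref{supga} hand you the doubling directly.
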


\begin{proof}
First, if $\phi(0) = 0$ then by the general version of the Schwarz Lemma
we have $\| \phi(x) \|_{Z} \leq\|x\|_{X}$ and $C_{\phi}$ is
continuous.\ For each $a \in B_{Z}$ we have $g_{a} : B_{Z} \rightarrow
B_{Z}$.\ If every
$C_{g_{a}}$ is continuous then all $C_{\phi}$ are continuous.\ Indeed, given
$\phi$, let $a = \phi(0)$ and define $\psi= g_{-a} \circ\phi$.\ Then
$\psi(0)=0$ and $C_{\phi} = C_{\psi} \circ C_{g_{a}} $ is
continuous.\ Therefore it is enough to prove that $C_{g_{a}}: H_{v_{Z}}(B_{Z})
\rightarrow H_{v_{Z}}(B_{Z})$ is continuous for all $a \in B_{Z}$ if and only
if, for all $0< s < s_{0}$,
\begin{equation}\label{contH}
l(s) \leq  M l(s/2)
\end{equation}

Assume that all $C_{g_{a}}$ are continuous.\ By Proposition \ref{contgen}, for each $a \in
B_{Z}$ we can find $M_{a} >0$ such that $\tilde{v}_{Z}(z) \leq M_{a} \tilde
{v}_{Z} (g_{a} (z))$ for all $z \in B_{Z}$.\ We also know that $\sup_{\|z\|=r}
\| g_{a}(z)\| = \frac{\| a \| + r}{1 + r \|a\|}$.\ Since $v_{Z}$ is norm-radial and non-increasing so also
is $\tilde{v}_{Z}$.\ Hence the previous can be rewritten as
\[
l(1-r) \leq M_{a} l\left(1-\frac{\| a \| + r}{1 + r \|a\|}\right)
= M_{a} l\left( \frac{(1-r)(1-\|a\|)}{1+r\|a\|}\right).
\]
\noindent Now, for $1/2 < r <1$ we have
\begin{equation}
\label{desigl}
l \left(  (1-r) \ \frac{1- \|a\|}{1+ \|a\|}\right)  \leq l \left(
1-\frac{\| a \| + r}{1 + r \|a\|} \right)
\leq l \left(  (1-r) \ \frac{1-\|a\|}{1+ \|a\|/2} \right)  .
\end{equation}
Let us fix $a$ with $\|a\|=2/5$ and use the second inequality in \eqref{desigl} to
get $l(1-r) \leq M_{a} l\left( \frac{(1-r)(1-\|a\|)}{1+r\|a\|}\right) \leq
M_{a} l(\frac{1-r}{2})$ for $1/2 < r <1$.\ This shows that \eqref{contH} holds.

Let us assume now that \eqref{contH} holds.\ Given any $c>0$ we can choose $n \in\mathbb{N}$
with $c<2^{n}$.\ If $s<s_{0}$, then $l(s) \leq K^{n}\ l(s/c)$.\ Given any
$a \in B_{Z}$, let us take $c=\frac{1+\|a\|}{1-\|a\|}$ and use the first inequality in \eqref{desigl} to
get that there exists $K_{a} >0$ such that holds.
\[
l(s) \leq K_{a} l(s/c) \leq K_{a} l \left(  1-\frac{\| a \| + (1-s)}{1 + (1-s) \|a\|} \right)
\]
for $s<s_{0} \leq 1/2$.\ Now, for $s_{0} \leq t \leq 1$, since $l$ is strictly positive,
the mapping $s \rightsquigarrow (l(s))(l(1-\frac{\|a\|(1-s)}{1+(1-s)\|a\|}))^{-1}$
is well defined and continuous; hence it has a maximum.\  Thus for any fixed $a \in B_{Z}$ we can find a
constant $M_{a} > 0$ such that for $0<r<1$ and $\|z\| = r$,
\[
\tilde{v}_{Z}(z) \leq M_{a} \ l \left(  1-\frac{\| a \| + r}{1 + r \|a\|}
\right)  \leq M_{a} \ \tilde{v}_{Z}(g_{a}(z)).
\]
Applying Proposition \ref{contgen}, $C_{g_{a}}$ is continuous.
\end{proof}

Several equivalent conditions on a weight $v$ so that $l$
satisfies \eqref{contH} are given in \cite[Lemma 1]{DoLi02} for the one-dimensional
case.\ Most of the proofs can be trivially adapted to the infinite dimensional case.\\

\noindent By taking  $X=Z$ and $v_{X}=v_{Z}$ in Theorem \ref{contJB*triple} we get
\begin{corolary}
Let $v$ be a norm-radial and non-increasing weight on a $JB^{\ast}-$triple $Z$.\ Every composition operator
$C_{\phi}$ on the weighted Banach space $H_{v} (B_{Z})$ is continuous for every self map $\phi$ on $B_{Z}$
if and only if the function $l(r) := \tilde{v}(z)$ for $\Vert z \Vert  = 1-r$, $0 < r < 1$ satisfies $l(s) \leq  M l(s/2)$
for $s$ close enough to 0.
\end{corolary}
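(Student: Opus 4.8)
The plan is to reduce the statement to the continuity analysis of the composition operators $C_{g_a}$ for the M\"obius-type mappings $g_a$, exactly as in Theorem \ref{contJB*triple}. The key point is that the corollary is just the diagonal case $X = Z$, $v_X = v_Z = v$ of that theorem: the hypothesis ``$\Vert z\Vert\leq\Vert x\Vert$ implies $v_Z(z)\geq Kv_X(x)$'' is then automatically satisfied with $K=1$, since $v$ is norm-radial and non-increasing, so $\Vert z\Vert\leq\Vert x\Vert$ gives $v(z)=v(z')\geq v(x)$ for any $z'$ with $\Vert z'\Vert=\Vert x\Vert$. Thus every hypothesis of Theorem \ref{contJB*triple} is met, and the conclusion is precisely the statement of the corollary.

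First I would observe that a self map $\phi$ of $B_Z$ is the same thing as a holomorphic map $\phi:B_X\rightarrow B_Z$ with $X=Z$, so the quantification over self maps in the corollary matches the quantification over holomorphic maps in the theorem. Then I would simply invoke Theorem \ref{contJB*triple}: it asserts that every such $C_\phi:H_{v_Z}(B_Z)\rightarrow H_{v_X}(B_X)=H_v(B_Z)$ is continuous if and only if $l(s)\leq Ml(s/2)$ for $s$ near $0$, where $l(r)=\tilde v(z)$ for $\Vert z\Vert=1-r$. Since the associated weight $\tilde v_X$ here coincides with $\tilde v_Z=\tilde v$, the function $l$ in the corollary is the same $l$ as in the theorem, and the equivalence transfers verbatim.

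There is essentially no obstacle here; the only thing to check carefully is that the side condition relating $v_Z$ and $v_X$ in the theorem is genuinely vacuous in the diagonal case, which follows immediately from norm-radiality and monotonicity as noted above. Hence the proof is a one-line specialization, and I would write it as: ``Apply Theorem \ref{contJB*triple} with $X=Z$ and $v_X=v_Z=v$; the hypothesis on the pair of weights holds trivially with $K=1$ because $v$ is norm-radial and non-increasing.''
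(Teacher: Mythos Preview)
Your proposal is correct and matches the paper's approach exactly: the paper derives the corollary simply by taking $X=Z$ and $v_X=v_Z$ in Theorem~\ref{contJB*triple}, and your verification that the side condition on the pair of weights holds with $K=1$ (via norm-radiality and monotonicity of $v$) is precisely the only detail one needs to check.
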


\begin{acknowledgement}
The authors wish to thank S. Dineen, D. Garc\'{\i}a and M. Maestre for all
their helpful comments.\ We also wish to thank the referees for their comments,
specially those regarding the statement and proof of Theorem \ref{contJB*triple}
and for drawing our attention to \cite{DoLi02}
\end{acknowledgement}

\begin{multicols}{3}
\raggedright
\small
\vbox{
\noindent Michael Mackey \\
Dept. of Mathematics\\
University College Dublin\\
Belfield, Dublin 4. Ireland\\$\mathtt{michael.mackey@ucd.ie}$ \\
}

\vbox{
\noindent Pablo Sevilla-Peris \\
Departamento de Matem\'{a}tica Aplicada\\
ETSMRE,\\ Universidad Polit\'{e}cnica de Valencia\\Av. Blasco Ib\'{a}\~{n}ez 21,\\ 46010 Valencia. Spain\\
$\mathtt{pablo.sevilla@uv.es}$\\
}

\vbox{
\noindent Jos\'e A. Vallejo \\
Dep. Matem\`atica Aplicada IV \\
Universitat Polit\`ecnica de Catalunya \\
Avda. del Canal Olímpic, s/n \\
08860 Castelldefels Spain \\
$\mathtt{jvallejo@ma4.upc.edu}$
}
\end{multicols}

\end{document}